\newcommand{\undertilde}[1]{\underset{\widetilde{}}{#1}}
\newtheorem{prop}{Proposition}
\title{Correlation between graphs with an application to brain networks analysis}
\author{Andr\'e Fujita, Daniel Yasumasa Takahashi,\\ Joana Bisol Balardin, and Jo\~ao Ricardo Sato}
\begin{document}
\maketitle

\begin{abstract}
The global functional brain network (graph) is more suitable for characterizing brain states than local analysis of the connectivity of brain regions. Therefore, graph-theoretic approaches are the natural methods to study the brain. However, conventional graph theoretical analyses are limited due to the lack of formal statistical methods for estimation and inference for random graphs. For example, the concept of correlation between two vectors of graphs is yet not defined. The aim of this article to introduce a notion of correlation between graphs. In order to develop a framework to infer correlation between graphs, we assume that they are generated by mathematical models and that the parameters of the models are our random variables. Then, we define that two vectors of graphs are independent whether their parameters are independent. The problem is that, in real world, the model is rarely known, and consequently, the parameters cannot be estimated. By analyzing the graph spectrum, we showed that the spectral radius is highly associated with the parameters of the graph model. Based on it, we constructed a framework for correlation inference between graphs and illustrate our approach in a functional magnetic resonance imaging data composed of 814 subjects comprising 529 controls and 285 individuals diagnosed with autism spectrum disorder (ASD). Results show that correlations between default-mode and control, default-mode and somatomotor, and default-mode and visual sub-networks are higher ($p<0.05$) in ASD than in controls.
\end{abstract}

\section{Introduction}
\label{sec:intro}

Cutting-edge brain mapping techniques such as functional magnetic resonance imaging (fMRI) generate huge amounts of datasets that allows the construction of whole brain functional networks. Attempts to analyze and quantitatively characterize the structural properties of these networks are based on techniques of an emergence new field, namely complex network analysis (\cite{Boccaletti}; \cite{Newman}; \cite{Strogatz}).

Complex network analysis originated from mathematics, more specifically in graph theory, and aims to characterize the whole brain networks with a few number of measures. In this approach, a brain network is represented by a graph, in which its vertices represent the brain regions of interest (ROI), and edges represent the functional associations between ROIs (e.g. functional connectivity). Various graph-theoretic metrics can be used to investigate the mechanisms underlying the functional brain networks. Some examples are measures of functional integration, network motifs, centrality, and network resilience (\cite{Rubinov}). The analysis of the structural properties of the graphs allow us to visualize and understand the overall connectivity pattern of ROIs and also to quantitatively characterize its organization. These approaches became more popular over the last decade after it has provided an essential framework to elucidate the relationship between brain structure and function, and also to have proven by an increasing number of studies to give insights regarding the potential mechanisms involved in aging (\cite{Perry2015}), sex differences (\cite{Ingalhalikar2014}), various brain disorders (\cite{Stam2014}), and structural reconfiguration of the brain in response to external task modulation (\cite{Sherwin2015}).

Although applications of methods developed in graph theory have been successful in the analysis of brain networks as aforementioned, there is still a gap between these graph-based computational approaches and Statistics. For example, to the best of our knowledge, the concept of correlation between graphs is unknown. The concept of correlation between graphs may aid the understanding of how brain sub-networks interact and also to identify differences in those interactions between controls and patients (subjects diagnosed with a disorder) that may be useful for the development of novel procedures for diagnosis and prognosis.

Graphs are difficult to be manipulated from a statistical viewpoint because they are not numbers, but objects composed of one set of vertices and one set of edges. By observing the graphs depicted in Figure \ref{figure:illustration}, it is very difficult to identify correlation between them by only analyzing their structures. Thus, to construct a framework to infer correlation, one natural idea would be to imagine that a graph is generated by a mathematical model with a set of parameters. The parameters are the random variables. Intuitively, two vectors of graphs are correlated whether the parameters (random variables) of the graph model are correlated (Figure \ref{figure:illustration}). However, given two vectors of graphs, the model that generates them is rarely known, and consequently, the parameters cannot be estimated. Thus it is necessary to identify a feature of the graph that is highly associated with the parameters of the graph. In order to identify the feature that contains the information of the parameter, we investigated the spectral properties of random graphs (set of eigenvalues of its adjacency matrix). It is known that some structural properties, such as the number of walks, diameter, and cliques can be described by the spectrum of the graph (\cite{Mieghem}). Here, we propose to estimate the correlation between graphs by using the spectral radius (largest eigenvalue) of the graphs. Our results show that the spectral radius is highly associated with the parameters that generate the graph, and thus, it can be a good feature to calculate correlation between two graphs.

\begin{figure}
\centering
\includegraphics[width=4in]{./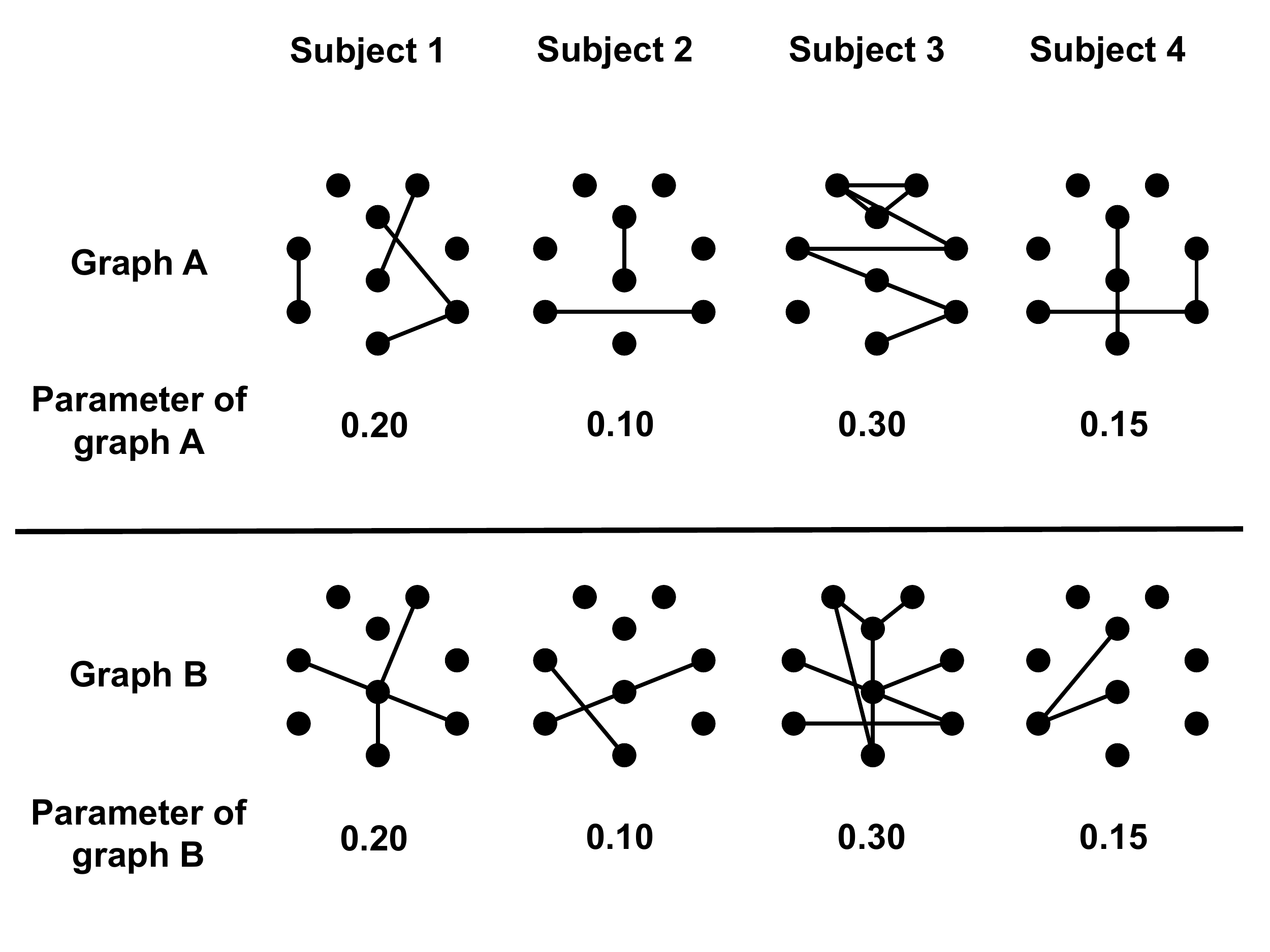}
\caption{Two vectors of perfectly correlated random graphs {\bf A} and {\bf B}, each one of size four. Graphs {\bf A} and {\bf B} are caricatural representations of two distinct brain sub-networks (e.g. somatomotor and default-mode) of four subjects. The identification of correlation by directly analyzing the structure of graphs {\bf A} and {\bf B} is very difficult. Notice that although they are generated by the same model (in this illustration, the graphs were generated by an Erd\"os-R\'enyi random graph model) and parameters, they are structurally different. Thus, one solution to identify correlation between graphs consists in identifying correlation between the parameters of the random graph models. \label{figure:illustration}}
\end{figure}



We illustrate the usefulness of our method by analyzing a large fMRI dataset (ABIDE - The Autism Brain Imaging Data Exchange - Consortium website - \url{http://fcon_1000.projects.nitrc.org/indi/abide/}) composed of 814 participants comprising 529 controls and 285 individuals with ASD.



\section{Description of the method}
\label{sec:meth}

\subsection{Graph}
A graph is a pair of sets $G=(V,E)$, where $V$ is a set of $n$ vertices ($v_{1}, v_{2}, \ldots, v_{n}$) and $E$ is a set of $m$ edges that connect two vertices of $V$.

Any undirected graph $G$ with $n$ vertices can be represented by its adjacency matrix ${\bf A}^G$ with $n \times n$ elements ${\bf A}_{ij}^G$ ($i,j=1, \ldots, n$), whose value is ${\bf A}_{ij}^G = {\bf A}_{ji}^G = 1$ if vertices $v_i$ and $v_j$ are connected, and 0 otherwise. The spectrum of graph $G$ is the set of eigenvalues of its adjacency matrix ${\bf A}^G$. Thus, an undirected graph with $n$ vertices has $n$ real eigenvalues $\lambda_1 \ge \lambda_2 \ge \ldots \ge \lambda_n$.


\subsection{Correlation between graphs}
\label{sec:correlation}

Two random variables are statistically independent whether knowledge about one of them does not aid in the prediction of the other. On the other hand, if they are not independent, then the values of one of the variables can be predicted by information provided about the other.

In this study, we consider that the parameters of the graph models are random variables. Thus, we assume we have $k$ independent graphs randomly generated by the same graph model, but each one with a distinct set of parameters (sampled from a probabilistic distribution).

Let $k$ and $\Theta$ be the number of graphs and their parameters, respectively. Then, let $\undertilde{\theta}^1 = \{ \theta_{1}^{1}, \theta_{2}^{1}, \ldots, \theta_{k}^{1} \}$ and $\undertilde{\theta}^2 = \{ \theta_{1}^{2}, \theta_{2}^{2}, \ldots, \theta_{k}^{2} \}$ be two samples of random variables $\Theta^1$ and $\Theta^2$, respectively, ${\undertilde{G}}^1(\undertilde{\theta}^1) = \{ G_{1}^{1}(\theta_{1}^{1}), G_{2}^{1}(\theta_{2}^{1}), \ldots, G_{k}^{1}(\theta_{k}^{1})\}$ and ${\undertilde{G}}^2(\undertilde{\theta}^2) = \{ G_{1}^{2}(\theta_{1}^{2}), G_{2}^{2}(\theta_{2}^{2}), \ldots, G_{k}^{2}(\theta_{k}^{2})\}$ be two samples of random graphs constructed by using $\undertilde{\theta}^1$ and $\undertilde{\theta}^2$, respectively. To illustrate this concept, suppose that ${\undertilde{G}}^1$ and ${\undertilde{G}}^2$ are two vectors of Erd\"os-R\'enyi random graphs (\cite{Erdos}). An Erd\"os-R\'enyi random graph (\cite{Erdos}) has $n$ labeled vertices in which each pair of vertices is connected by an edge with a given probability $p$. In this case, the probability $p$ is the parameter of graph $G$. Thus, the two vectors of Erd\"os-R\'enyi random graphs can be described as ${\undertilde{G}}^1(\undertilde{p}^1)=\{ G_{1}^{1}(p_{1}^{1}), G_{2}^{1}(p_{2}^{1}), \ldots, G_{k}^{1}(p_{k}^{1})\}$ and ${\undertilde{G}}^2(\undertilde{p}^2) = \{ G_{1}^{2}(p_{1}^{2}), G_{2}^{2}(p_{2}^{2}), \ldots, G_{k}^{2}(p_{k}^{2})\}$.

We say that random graphs ${\undertilde{G}}^1(\undertilde{\theta}^1)$ and ${\undertilde{G}}^2(\undertilde{\theta}^2)$ are independent if the vectors of parameters ${\undertilde{\theta}}^1$ and ${\undertilde{\theta}}^2$ are independent. In our example for an Erd\"os-R\'enyi random graph, we say that graphs ${\undertilde{G}}^1(\undertilde{p}^1)$ and ${\undertilde{G}}^2(\undertilde{p}^2)$ are correlated if the vectors of probabilities ${\undertilde{p}}^1=\{p_{1}^{1}, \ldots, p_{k}^{1}\}$ and ${\undertilde{p}}^2=\{p_{1}^{2}, \ldots, p_{k}^{2}\}$ are correlated.

Formally, two random variables $\Theta^{1}$ and $\Theta^{2}$ with probability density functions $f(\Theta^1)$ and $f(\Theta^2)$ are independent if and only if the combined random variable $(\Theta^1, \Theta^2)$ has a joint probability density function $f(\Theta^1,\Theta^2)=f(\Theta^1) \times f(\Theta^2)$. We say that two random variables $\Theta^1$ and $\Theta^2$ are dependent if they are not independent.

The test of independence between $G^1$ and $G^2$ is described as a hypothesis test as follows:\\
$\text{H}_0$: $\Theta^1$ and $\Theta^2$ are independent (null hypothesis)\\
$\text{H}_1$: $\Theta^1$ and $\Theta^2$ are not independent (alternative hypothesis)

One simple manner to identify correlation between $\Theta^1$ and $\Theta^2$ consists in, if the graph model is known, to estimate the parameters of the graphs, and then test the probabilistic dependence between them. However, the graph model is rarely known for real world graphs. Thus, the problem consists in detecting dependence only from the observation of random graphs (and not the parameters). In other words, it is necessary to identify a feature of the graph that is highly associated with the parameters of the graph.

From spectral graph theory, the largest eigenvalue ($\lambda_1$) of a graph $G$ is known as its spectral radius or index (for simplicity, we will denote the largest eigenvalue $\lambda_1$ just as $\lambda$). For several random graphs, it is known that the spectral radius is a function of the parameters of the graph. For example, for the Erd\"os-R\'enyi random graph, let $n$ and $p$ be the number of vertices and the probability that two vertices are connected by an edge, respectively. Then, the spectral radius of an Erd\"os-R\'enyi random graph is $np$. Thus, we propose to use the spectral radius to identify correlation between graphs.

Let ${\undertilde{G}}^1 = \{ G_{1}^{1}, G_{2}^{1}, \ldots, G_{k}^{1},\}$ and ${\undertilde{G}}^2 = \{ G_{1}^{2}, G_{2}^{2}, \ldots, G_{k}^{2},\}$ be two samples of random graphs and $\undertilde{\lambda}^1 = \{ \lambda_{1}^{1}, \lambda_{2}^{1}, \ldots, \lambda_{k}^{1} \}$ and $\undertilde{\lambda}^2 = \{ \lambda_{1}^{2}, \lambda_{2}^{2}, \ldots, \lambda_{k}^{2} \}$ be the spectral radii associated with ${\undertilde{G}}^1$ and ${\undertilde{G}}^2$, respectively. Thus, to identify correlation between graphs, one may test the independence between $\undertilde{\lambda}^1$ and $\undertilde{\lambda}^2$.



\subsubsection{Identification of the correlation between graphs} \label{section:spearman_correlation}
Once defined the feature to be used to identify the correlation between graphs, then it is necessary to estimate the correlation itself. We propose the use of the Spearman's rank correlation ($\rho$) because: (i) its implementation is simple; (ii) it is robust to outliers; and (iii) it does not require assumptions of linearity in the relationship between variables (it can identify monotonic nonlinear associations), nor the variables should be measured at interval scales, as it can be used for ordinal variables (\cite{spearman1904general}).

Let $\hat{\rho}$ be the sample Spearman's rank correlation coefficient. To estimate $\hat{\rho}$, first convert the raw values of $\lambda^1_i$ and $\lambda^2_i$ ($i=1, \ldots, k$) to ranks, and calculate the differences $d_i$ between the ranks of $\lambda^1_i$ and $\lambda^2_i$. Then, calculate the Spearman's rank correlation coefficient ($\hat{\rho}$) as:

\begin{math}
\hat{\rho} = 1-\frac{6\sum_{i=1}^{k}d_i^2}{k(k^2-1)}
\end{math}

\noindent where $\lim_{k \rightarrow \infty}\mathbb{E}[\hat{\rho}(\undertilde{\lambda}^1, \undertilde{\lambda}^2)] := \rho(\lambda^1, \lambda^2) = 12 \mathbb{E}[F_1(\lambda^1)F_2(\lambda^2)] - 3$. 

Observe that if $\lambda^1$ and $\lambda^2$ are independent, then $\rho(\lambda^1, \lambda^2) = 0$. The main idea of this article is that in several cases $12 \mathbb{E}[F_1(\lambda^1)F_2(\lambda^2)] - 3 = 0$ if and only if $12 \mathbb{E}[F_1(\Theta^1)F_2(\Theta^2)] - 3 = 0$ therefore we can use $\hat{\rho}(\undertilde{\lambda}^1, \undertilde{\lambda}^2)$ to estimate $12 \mathbb{E}[F_1(\Theta^1)F_2(\Theta^2)] - 3$.

We can prove the validity of this idea in a simple case. Denote by $F$ the joint probability distribution for $(\Theta^1, \Theta^2)$ and the marginals for $\Theta^1$ and $\Theta^2$ by $F_1$ and $F_2$, respectively. 
\begin{prop}
Let $F$ be differentiable on both coordinates. Given i.i.d. copies $(\Theta^1_i, \Theta^2_i)_{i = 1,\ldots, k}$ of $(\Theta^1, \Theta^2)$, let $(G^1_i(\Theta^1_i))_{i = 1,\ldots, k}$ and $(G^2_i(\Theta^2_i))_{i = 1,\ldots, k}$ be independent ER random graphs of size $n$. Then, for any positive $\epsilon, \delta$ there exist an integer $k_0$ such that for all $k > k_0$ and $n > n_0(k)$ we have with probability larger than $1-\delta$ that
\begin{equation}
 \left|\hat{\rho}(\undertilde{\lambda}^1, \undertilde{\lambda}^2) - 12 \mathbb{E}[F_1(\Theta^1)F_2(\Theta^2)] +3\right| \leq \epsilon 
 \end{equation}
\end{prop}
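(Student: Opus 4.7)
My plan is a triangle-inequality decomposition, treating the unobservable parameters $\Theta$ as an intermediate object. Writing $\rho^\star$ for the population Spearman correlation of $(\Theta^1, \Theta^2)$ and $\hat{\rho}(\undertilde{\Theta}^1, \undertilde{\Theta}^2)$ for the sample Spearman correlation based on the (unobservable) parameter pairs, the absolute error splits as $A + B + C$ where $A = |\hat{\rho}(\undertilde{\lambda}^1, \undertilde{\lambda}^2) - \hat{\rho}(\undertilde{\Theta}^1, \undertilde{\Theta}^2)|$, $B = |\hat{\rho}(\undertilde{\Theta}^1, \undertilde{\Theta}^2) - \rho^\star|$, and $C = |\rho^\star - (12\mathbb{E}[F_1(\Theta^1)F_2(\Theta^2)] - 3)|$. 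I would choose $k_0$ to control $B$ and, given $k$, choose $n_0(k)$ to control $A$; term $C$ will turn out to be identically zero.

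Term $C$ vanishes because the differentiability of $F$ in each coordinate makes $F_1, F_2$ continuous, so by the probability integral transform $F_j(\Theta^j) \sim \mathrm{Uniform}[0,1]$, yielding the classical identity $\rho^\star = 12\,\mathrm{Cov}(F_1(\Theta^1), F_2(\Theta^2)) = 12\mathbb{E}[F_1(\Theta^1)F_2(\Theta^2)] - 3$. Term $B$ is handled by the fact that Spearman's rank correlation is a bounded, continuous functional of the bivariate empirical distribution and is strongly consistent for $\rho^\star$ (a standard $V$-statistic / Glivenko--Cantelli argument); I would pick $k_0$ so that $B \leq \epsilon$ with probability at least $1 - \delta/2$ whenever $k > k_0$.

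Term $A$ is the real work. Since $\hat{\rho}$ sees only ranks within each sample, $A = 0$ on the event that the ordering of $(\lambda^1_i)_{i=1}^k$ coincides with that of $(\Theta^1_i)_{i=1}^k$, and similarly for the second coordinate. Let $\Delta_k = \min_{j \in \{1,2\}} \min_{i \neq i'} |\Theta^j_i - \Theta^j_{i'}|$. Because $F_1, F_2$ are continuous, $\Delta_k > 0$ almost surely, so there is $\eta_k > 0$ with $\mathbb{P}(\Delta_k < 2\eta_k) < \delta/4$. Next I invoke the classical concentration of the Erd\"os--R\'enyi spectral radius: for $p$ bounded away from $0$ and $1$, $\lambda_1(G(n,p)) = np + O(\sqrt{n})$ with high probability (F\"uredi--Koml\'os). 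Conditioning on $(\Theta^j_i)$ and applying a union bound across the $2k$ independent graphs, for $n$ sufficiently large in terms of $k$ and $\eta_k$ we get $\max_{i,j} |\lambda^j_i/n - \Theta^j_i| < \eta_k$ with probability at least $1 - \delta/4$. On the intersection of these two good events, the rescaled spectral radii preserve the ranking of the parameters, so $A = 0$. Combining the three bounds by the triangle inequality finishes the proof.

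The main obstacle is the quantitative version of $A$: the minimum gap $\Delta_k$ between $k$ i.i.d.\ continuous draws shrinks rapidly (typically like $k^{-2}$), so $\eta_k$ must shrink too and $n_0(k)$ must be taken correspondingly large; this is precisely why the statement lets $n$ depend on $k$. A secondary technical point is that the F\"uredi--Koml\'os bound degrades as $p \to 0$ or $p \to 1$, so if $\Theta$'s support touches an endpoint I would first truncate to $[\alpha, 1-\alpha]$, choosing $\alpha$ small enough that $\mathbb{P}(\Theta \notin [\alpha,1-\alpha]) < \delta/(8k)$, and apply the spectral concentration only on the truncated event before taking the union bound.
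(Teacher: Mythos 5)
Your proof follows essentially the same route as the paper's: a triangle-inequality split into (i) consistency of the sample Spearman coefficient of the parameters for $12\,\mathbb{E}[F_1(\Theta^1)F_2(\Theta^2)]-3$ (invoked as a classical result) and (ii) a rank-preservation argument showing that, on the event where every rescaled spectral radius $\lambda^j_i/n$ lies within half the minimum parameter gap of $\Theta^j_i$, the two sample Spearman coefficients coincide exactly, with a union bound over the $2k$ graphs. If anything you are more careful than the paper on two points it glosses over: you fix a deterministic threshold $\eta_k$ for the random minimum gap before letting $n\to\infty$ (the paper treats the random $\gamma$ as if it were a constant), and you truncate $\Theta$ away from the endpoints where the spectral-radius concentration degrades.
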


\begin{proof}
Let $\mathbb{P}$ be the joint probability measure for the sequences $(\Theta^1_i)_{i \geq 1}$, $(\Theta^2_i)_{i \geq 1}$, $(G^1_i(\Theta^1_i))_{i \geq 1}$, and $(G^2_i(\Theta^2_i))_{i \geq 1}$. To prove the proposition, it is enough to show that for suitable choices of $k$ and $n$, we have
\begin{equation}\label{eq:boundeig}
\mathbb{P}\left(\left|\hat{\rho}(\undertilde{\lambda}^1, \undertilde{\lambda}^2) - \hat{\rho}(\undertilde{\Theta}^1, \undertilde{\Theta}^2) \right| > \epsilon/2\right)< \delta/2
\end{equation}
and
\begin{equation} \label{eq:boundrho}
 \mathbb{P}\left(\left|\hat{\rho}(\undertilde{\Theta}^1, \undertilde{\Theta}^2) - 12 \mathbb{E}[F_1(\Theta^1)F_2(\Theta^2)] +3\right| > \epsilon/2 \right) < \delta/2.
\end{equation} 

It is a classical result (see for example \cite{borkowf2002computing}) that $\hat{\rho}((\Theta^1_i)_{i = 1,\ldots, k}, (\Theta^2_i)_{i = 1,\ldots, k})$ converges in probability to  $12 \mathbb{E}[F_1(\Theta^1)F_2(\Theta^2)] -3$, therefore, for sufficiently large $k$, we have that \eqref{eq:boundrho} holds.

Now, it remains to prove that there is $n_0(k)$ such that for all $n > n_0(k)$ inequality \eqref{eq:boundeig} holds. Let $\hat{r}_i^1$ and $\hat{r}_i^2$ for $i =1, \ldots, k$ be the ranks of the spectral radii of the graphs $G^1_i(\Theta^1_i)$ and $G^2_i(\Theta^2_i)$, respectively. Also, let $r_i^1$ and $r_i^2$ for $i =1, \ldots, k$ be the ranks of the $\Theta^1_i$ and $\Theta^2_i$, respectively. 
From the definition of Spearman correlation, it is clear that if $\hat{r}_i^l = r_i^l$ for $i=1,\ldots, k$ and $l =1,2$, we have that 
\begin{equation*} 
\hat{\rho}(\undertilde{\lambda}^1, \undertilde{\lambda}^2) = \hat{\rho}(\undertilde{\Theta}^1, \undertilde{\Theta}^2).
\end{equation*}

Therefore,  we have that
\begin{align*}
&\mathbb{P}\left(\left|\hat{\rho}(\undertilde{\lambda}^1, \undertilde{\lambda}^2)  - \hat{\rho}(\undertilde{\Theta}^1, \undertilde{\Theta}^2) \right| > \epsilon/2\right)\\
& \leq \mathbb{P}\left(\hat{\rho}(\undertilde{\lambda}^1, \undertilde{\lambda}^2)  \neq \hat{\rho}(\undertilde{\Theta}^1, \undertilde{\Theta}^2)\right)\\
&\leq \mathbb{P} \left(\hat{r}^l_i \neq r^l_i \;\; \text{for some} \;\; i=1,\ldots, k\;\; \text{and}\;\; l =1,2 \right)\\
& \leq \sum_{l =1}^2\sum_{i=1}^k\mathbb{P}\left( \hat{r}^l_i \neq r^l_i \right)
\end{align*}

To prove \eqref{eq:boundeig},  using the above inequalities, it is enough to show for $l = 1,2$ and $i = 1, \ldots, k$ that 
\begin{equation*}
\mathbb{P}\left( \hat{r}^l_i \neq r^l_i \right) < \frac{\delta}{4k}.
\end{equation*}

Let $\gamma = \min\{| \Theta^l_i - \Theta^l_j |: i, j = 1, \ldots, k \;\; \text{and}\; l = 1,2\}$. We have that
\begin{equation*}
\mathbb{P}\left( \hat{r}^l_i \neq r^l_i \right)  \leq \mathbb{P}\left(\left|\frac{\lambda_i^l}{n} - \Theta^l_i \right| > \gamma/2 \right).
\end{equation*}
For an increasing sequence of ER random graph with parameter $p$ and their respective spectral radii $\lambda(n)$, it is well known that $\lambda(n)/n$ converges in probability to $p$ (\cite{ding2010spectral}), therefore, for large enough $n_0$ we have that for all $n > n_0$
\begin{equation*}
\mathbb{P}\left(\left|\frac{\lambda_i^l}{n} - \Theta^l_i \right| > \gamma/2 \right) < \frac{\delta}{4k}.
\end{equation*}
This concludes the proof of the proposition.
\end{proof}



Spearman's correlation coefficient ($\rho$) assumes values between -1 and 1, where $\rho$ is $+1$ in the case of a perfect monotonically increasing relationship and $-1$ in the case of a perfect monotonically decreasing relationship. In the case of imperfect monotonically dependence, $-1 < \rho < +1$, and in the case of monotonically independent random variables, $\rho=0$.

Thus, the hypothesis test to identify Spearman's correlation between two vectors of graphs can be defined as:

$\text{H}_0$: $\rho=0$ (null hypothesis)\\
$\text{H}_1$: $\rho\ne0$ (alternative hypothesis)

The Spearman's rank correlation coefficient under the null hypothesis can be asymptotically approximated by a Student's t-distribution with $k-2$ degrees of freedom as (\cite{spearman1904general}):

\begin{math}
t=\frac{\hat{\rho}\sqrt{k-2}}{\sqrt{1-\hat{\rho}^2}}
\end{math}

Now, suppose we have two conditions A and B and consequently, we have four graphs ${\undertilde{G}}^1_{\text{A}}$ and ${\undertilde{G}}^2_{\text{A}}$ in condition A, and ${\undertilde{G}}^1_{\text{B}}$ and ${\undertilde{G}}^2_{\text{B}}$ in condition B. We are interested in testing whether the correlation between graphs ${\undertilde{G}}^1_{\text{A}}$ and ${\undertilde{G}}^2_{\text{A}}$ are equal to the correlation between ${\undertilde{G}}^1_{\text{B}}$ and ${\undertilde{G}}^2_{\text{B}}$. In other words, we would like to test

$\text{H}_0: \rho_{\text{A}} = \rho_{\text{B}}$\\
$\text{H}_1: \rho_{\text{A}} \ne \rho_{\text{B}}$.

This test can be performed by using the procedure developed by \cite{Fisher}. Let $k_\text{A}$ and $k_\text{B}$ be the number of graphs in conditions A and B, respectively. First, transform each of the two correlation coefficients as $\hat{\rho}'_{\text{A}} = \frac{1}{2}\text{log}(\frac{1+\hat{\rho}_{\text{A}}}{1-\hat{\rho}_{\text{A}}})$ and $\hat{\rho}'_{\text{B}} = \frac{1}{2}\text{log}(\frac{1+\hat{\rho}_{\text{B}}}{1-\hat{\rho}_{\text{B}}} )$. Then, calculate the test statistic as $z=\frac{\hat{\rho}'_{\text{A}} - \hat{\rho}'_{\text{B}}}{\sqrt{ \frac{1}{k_\text{A} - 3} + \frac{1}{k_\text{B}-3}}}$. Finally, compute the $p$-value for the $z$-statistic.

In the present study, we used the \verb+R+ function \verb+cor.test+ with parameter \verb+method='spearman'+ (package \verb+stats+) to compute the Spearman's correlation test.


\section{Simulation studies} \label{section:simulations}

We carried out Monte Carlo simulations in five different random graphs to illustrate the performance of the proposed framework. Among several classes of random graphs, we describe the Erd\"os-R\'enyi random graph (\cite{Erdos}), random geometric graph (\cite{Penrose}), random regular graph (\cite{Meringer}), Barab\'asi-Albert random graph (\cite{Barabasi}), and Watts-Strogatz random graph (\cite{Watts}), due to their importance to model real world events or their well known theoretical results.


\subsection{Random graph models}

\subsubsection{Erd\"os-R\'enyi random graph} \label{section:random}

Erd\"os-R\'enyi random graphs (\cite{Erdos}) are one of the most studied random graphs. Erd\"os and R\'enyi defined a random graph as $n$ labeled vertices in which each pair of vertices $(v_i,v_j)$ is connected by an edge with a given probability $p$.

The spectral radius of an Erd\"os-R\'enyi random graph is $np$ (\cite{Furedi}).

The \verb+R+ function used to generate an Erd\"os-R\'enyi random graph is \verb+erdos.renyi.game+ (package \verb+igraph+). The \verb+igraph+ package can be downloaded from the \verb+R+ website (http://www.r-project.org).


\subsubsection{Random geometric graph}

A random geometric graph (RGG) is a spatial network. An undirected graph is constructed by randomly placing $n$ vertices in some topological space $\text{R}^d$ (e.g. a unit square - $d=2$) according to a specified probability distribution (e.g. uniform distribution) and connecting two vertices by an edge if their distance (according to some metric, e.g., Euclidian norm) is smaller than a certain neighborhood radius $r$. Hence, random geometric graphs have a spatial element absent in other random graphs.

The spectral radius of a random geometric graph converges almost surely to $r^{d}$ (\cite{Bordenave}).

The \verb+R+ function used to generate a geometric random graph is \verb+grg.game+ (package \verb+igraph+).



\subsubsection{Random regular graph}

A random regular graph is a graph where each vertex has the same number of adjacent vertices; i.e. every vertex has the same degree. A random regular graph with vertices of degree $deg$ is called a random $deg$-regular graph or random regular graph of degree $deg$ (\cite{Meringer}).

Random regular graphs of degree at most 2 are well known: a 0-regular graph consists of disconnected vertices; a 1-regular graph consists of disconnected edges; a 2-regular graph consists of disconnected cycles and infinite chains; a 3-regular graph is known as a cubic graph.

The spectral radius of a random $deg$-regular graph is $deg$ (\cite{Alon}).

The \verb+R+ function used to generate a regular random graph is \verb+k.regular.game+ (package \verb+igraph+).


\subsubsection{Barab\'asi-Albert random graph} \label{section:SF-graph}

Barab\'asi-Albert random graphs proposed by \cite{Barabasi} have a power-law degree distribution due to vertices preferential attachment (the more connected a vertex is, the more likely it is to receive new edges). \cite{Barabasi} proposed the following construction: start with a small number of ($n_0$) vertices and at every time-step, add a new vertex with $m_1$ ($m_1 \le n_0$) edges that connect the new vertex to $m_1$ different vertices already present in the system. When choosing the vertices to which the new vertex connects, assume that the probability that a new vertex will be connected to vertex $v_i$ is proportional to the degree of vertex $v_i$ and the scaling exponent $p_s$ ($P(v_i) \sim degree(v_i)^{p_s}$, where $degree(v_i)$ is the number of adjacent edges of vertex $v_i$ in the current time step) which indicates the order of the proportionality ($p_s=1$ linear; $p_s=2$ quadratic and so on).

Let $k_0$ be the smallest degree, the spectral radius of the Barab\'asi-Albert random graph is of the order of $k_{0}^{1/2}n^{1/2(p_s-1)}$ (\cite{Dorogovtsev}).

The \verb+R+ function used to generate a Barab\'asi-Albert random graph is \verb+barabasi.game+ (package \verb+igraph+).


\subsubsection{Watts-Strogatz random graph} \label{section:WS-graph}

Watts-Strogatz random graph (\cite{Watts}) is a random graph that interpolates between a regular lattice and an Erd\"os-R\'enyi random graph. This random graph present small-world properties (short average path lengths, i.e., most vertices are not neighbors of one another but can be reached from every other vertex by a small number of steps) and higher clustering coefficient (the number of triangles in the graph) than Erd\"os-R\'enyi random graphs.

The algorithm to construct a Watts-Strogatz random graph is as follows:

{\bf Input}: Let $n$, $nei$, and $p_w$ be the number of vertices, the number of neighbors (mean degree), and the rewiring probability, respectively.
\begin{enumerate}
\item construct a ring lattice with $n$ vertices, in which every vertex is connected to its first $nei$ neighbors ($\frac{nei}{2}$ on either side);
\item choose a vertex and the edge that connects it to its nearest neighbor in a clockwise sense. With probability $p_w$, reconnect this edge to a vertex chosen uniformly at random over the entire ring. This process is repeated by moving clockwise around the ring, considering each vertex in turn until one lap is completed. Next, the edges that connect vertices to their second-nearest neighbors clockwise are considered. As in the previous step, each edge is randomly rewired with probability $p_w$; continue this process, circulating around the ring and proceeding outward to more distant neighbors after each lap, until each edge in the original lattice has been considered once.
\end{enumerate}
{\bf Output}: the Watts-Strogatz random graph

To the best of our knowledge, the spectral radius of a Watts-Strogatz random graph is not analytically defined, but there are empirical evidences that it is a function of $p_w$ and $nei$ (\cite{Mieghem}).

The \verb+R+ function used to generate a Watts-Strogatz random graph is \verb+watts.strogatz.game+ (package \verb+igraph+).


\subsection{Simulation description} \label{simulation}
We designed three simulations to evaluate: (i) whether the Spearman's correlation between the spectral radii indeed retrieves the association between the parameters of the graph; (ii) the control of the rate of type I error and power of the method based on the spectral radius; and (iii) the performance of the Fisher's test on the spectral radius.

\subsubsection{Simulation 1}\label{simulation1}
In order to verify whether the Spearman's correlation coefficient between the spectral radii in fact retrieves the association between the parameters of the graph, we compared the correlation estimated directly from the parameters with the correlation obtained by analyzing the spectral radius. The design of the experiment is as follows: we set the graph model as the Erd\"os-R\'enyi random graph, the number of graphs as $k=50$ and the size of the graph as $n=100$. The parameter $p$ of the Erd\"os-R\'enyi random graphs were generated from a bivariate normal distribution with mean zero and covariance matrix $\Sigma=\begin{pmatrix} 1 & s\\ s & 1\end{pmatrix}$ with $s = -1.0, -0.9, \ldots,0, \ldots, 0.9, 1.0$.  Spearman's correlation coefficients are estimated by using the spectral radii. For each value of $s$, we repeated this procedure 30 times and compared the estimated correlation coefficient with the correlation ($s$) in fact used to generate the graphs.

\subsubsection{Simulation 2}\label{simulation2}
In order to evaluate the control of the rate of false positives under the null hypothesis and also its statistical power to identify correlation between two vectors of graphs, we constructed the following simulation study.

The parameters of the graphs are generated from a bivariate normal distribution with covariance matrix $\Sigma$ ($(\Theta^1, \Theta^2) \sim N(0, \Sigma)$). In order to evaluate the control of the rate of false positives under the null hypothesis (no correlation), we set $\Sigma=\begin{pmatrix} 1 & 0\\0 & 1\end{pmatrix}$. To evaluate the power of the test, we set $\Sigma=\begin{pmatrix} 1 & 0.5\\0.5 & 1\end{pmatrix}$. Notice that the vectors of parameters ($(\undertilde{\theta}^1, \undertilde{\theta}^2)$) must be linearly normalized in the interval $[0, 1]$. The set-up of the parameters of the graphs is as follows:

\begin{itemize}
\item Erd\"os-R\'enyi random graph: $p := \theta$
\item Random geometric graph: $r := \theta$ and $d=2$
\item Random regular graph: $deg$ := $\text{integer part of } 10 \times \theta$
\item Barab\'asi-Albert random graph: $k_0=3$ and $p_s := \text{integer part of } 10\times \theta$
\item Watts-Strogatz random graph: $nei=3$, and $p_w := \theta$
\end{itemize}

The number of graphs varied in $k = 20, 40, 60, 80, 100$. The size of the graphs was set to $n=50$.

The vectors of spectral radii ($\undertilde{\lambda}^1$ and $\undertilde{\lambda}^2$) were computed by using the adjacency matrix of each graph; and the Spearman's correlation test applied on both the parameters and the spectral radii.

This process was repeated 1,000 times for each number of graphs $k$ and pair of graph models. In order to evaluate and compare the power of the test between applying the correlation test on the parameters of the graph or on the spectral radii, we constructed receiver-operating characteristic (ROC) curves. The ROC curve is a bi-dimensional plot with the one minus the specificity (number of true negatives/(number of true negatives+number of false positives)) on the $x$-axis and the sensitivity (number of true positives/(number of true positives+number of false negatives)) on the $y$-axis. A curve above and further the diagonal means high power while a curve close to the diagonal means random decisions. In our case, the nominal $p$-value is on the $x$-axis and the proportion of rejected null hypothesis (the proportion of associations identified between two random variables), on the $y$-axis. ROC curves were plot (i) to verify the control of the rate of false positives; (ii) to evaluate the power of the test; and (iii) to compare the performance of the correlation estimated by using the spectral radius and the original parameter of the graph.

\subsubsection{Simulation 3}\label{simulation3}
In order to evaluate the performance of Fisher's test on both the control of the rate of false positives under the null hypothesis and also its statistical power between two conditions A and B, we constructed ROC curves for the following experimental set-up.

The graphs were constructed by using the Erd\"os-R\'enyi random graph model with the parameter $p$ generated by bivariate normal distributions with means zero and covariance matrices $\Sigma_{\text{A}}$ and $\Sigma_{\text{B}}$ for conditions A and B, respectively. In order to evaluate the control of the rate of false positives under the null hypothesis (same correlation between conditions A and B), we set $\Sigma_{\text{A}}=\Sigma_{\text{B}}=\begin{pmatrix} 1 & 0.5\\0.5 & 1\end{pmatrix}$. To evaluate the power of the test in identifying differences in correlation between conditions A and B, we set $\Sigma_{\text{A}}=\begin{pmatrix} 1 & 0.3\\0.3 & 1\end{pmatrix}$ and $\Sigma_{\text{B}}=\begin{pmatrix} 1 & 0.6\\0.6 & 1\end{pmatrix}$. The number of graphs varied in $k = 20, 40, 60, 80, 100$. The size of the graphs was set to $n=50$. This process was repeated 1,000 times for each number of graphs $k$.


\subsection{Results and analysis of the simulations}
One first natural question is, instead of using the spectral radius, may one use another feature of the graph, such as the number of edges or measures of network centrality? In order to verify whether the spectral radius is indeed better than those measures, we simulated the five random graph models described in section \ref{section:random} and compared the performance between the spectral radius against other six measures, namely transitivity centrality, betweenness centrality, closeness centrality, eigenvector centrality, degree centrality, and assortativity.

The experimental set-up is as follows. The parameters of the graphs were generated by uniform distributions: Erd\"os-R\'enyi random graph $p \sim U(0,1)$, random geometric graph $r \sim U(0,1)$, random regular graph $k \sim \text{integer part of }U(1,10)$, Barab\'asi-Albert random graph $p_s \sim \text{integer part of }U(1,4)$, Watts-Strogatz random graph $p_w \sim U(0,1)$. The number of graphs is set to $k=30$. The number of vertices of the graph varied in $n=25, 50, 75, 100$. Then, we calculated the Spearman's correlation coefficient between the original parameter used to generate the graph and the feature (the spectral radius, transitivity centrality, betweenness centrality, closeness centrality, eigenvector centrality, degree centrality, and assortativity). This process was repeated 100 times for each $n$ and graph model. The correlation calculated here measures how much information regarding the parameter is represented in the feature. In other words, it measures how well the feature describes the parameter. 

\begin{figure}
\centering
\includegraphics[width=4in]{./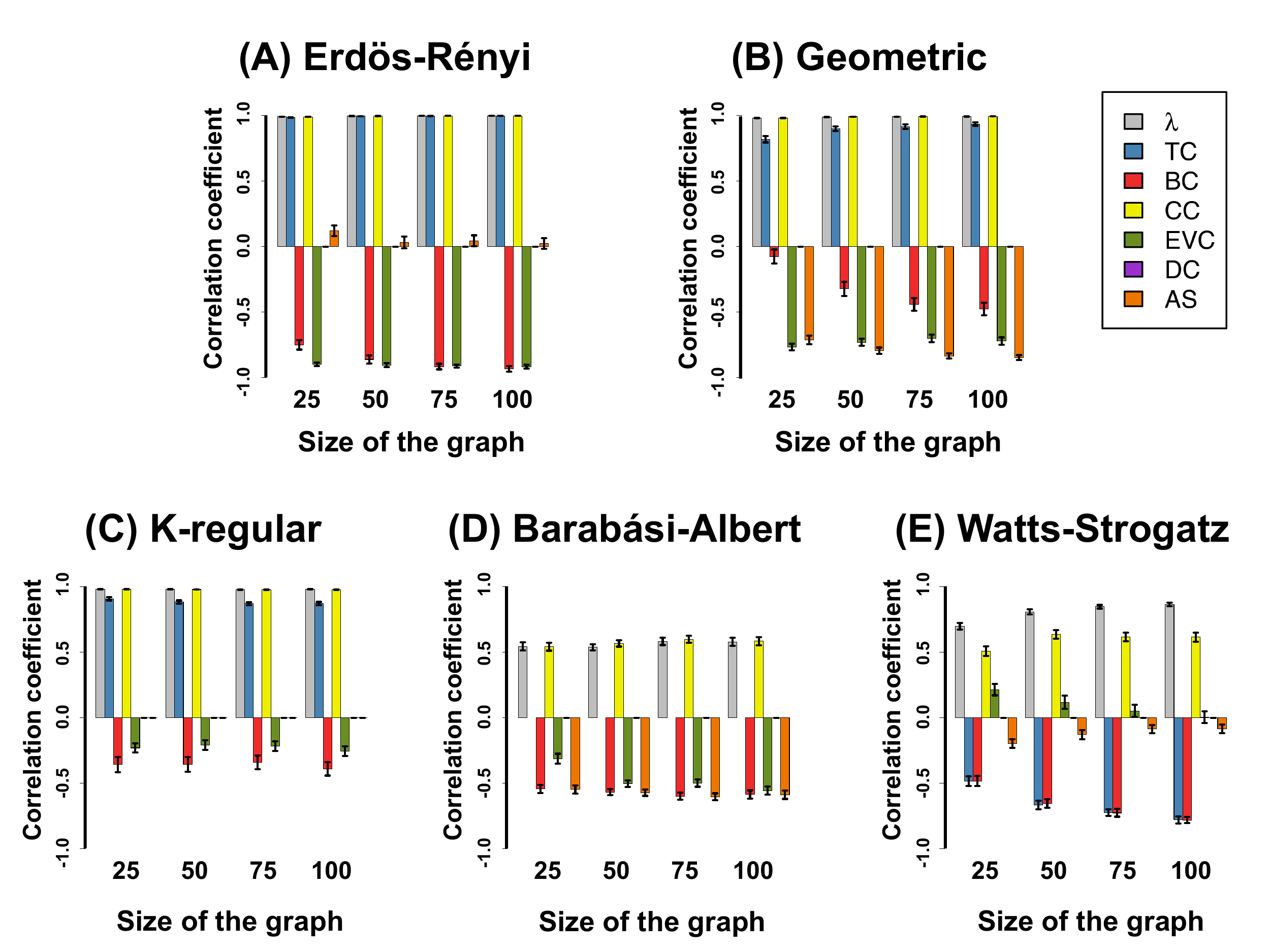}
\caption{Simulation study to select the most suited feature of the graph to be used in the identification of correlation between vectors of graphs. The $x$-axis and $y$-axis represent the size of the graph $n$ (number of vertices) and the average Spearman's correlation coefficient between the original parameter and the feature. The error bars represent the 95\% confidence interval. $\lambda$: spectral radius; TC: transitivity centrality; BC: betweenness centrality; CC: closeness centrality; EVC: eigenvector centrality; DC: degree centrality; AS: assortativity. Notice that for Erd\"os-R\'enyi, geometric, and $k$-regular graphs, spectral radius, transitivity, and closeness centralities are the best choices. For Albert-Barab\'asi random graph model, the spectral radius and the closeness centrality are the best features. However, for Watts-Strogatz random graph model, the spectral radius is the most correlated to the parameter ($p_w$) of the graph. \label{figure:best_feature}}
\end{figure}

Figure \ref{figure:best_feature} illustrates the average Spearman's correlation coefficient and the 95\% confidence interval between the actual parameter used to generate the graphs and the features. For Erd\"os-R\'enyi random graph, the spectral radius, transitivity, and closeness centralities presented the highest associations with the parameter $p$. For geometric, $k$-regular, and Barab\'asi-Albert random graph models, the spectral radius and the closeness centrality are the features that better represent the actual parameter. For Watts-Strogatz random graph model, the spectral radius is the most correlated to the parameter of the graph. By combining all these results, we conclude that the spectral radius is the one that contains the highest information regarding the parameters and consequently is the most suited feature to be used to identify correlation between graphs. The quite high performance of closeness centrality can be explained by the fact that for Erd\"os-R\'enyi, geometric, $k$-regular, and Barab\'asi-Albert random graph models, the parameters of the graph are associated with the number of edges. Notice that highly connected graphs tends to present higher closeness centrality. On the other hand, the parameter $p_w$ of the Watts-Strogatz random graph model represents the rewiring probability of the edges. In other words, what vary along Watts-Strogatz random graphs is their structure (connectivity), and not the number of edges. Thus, the performance of closeness centrality becomes poor and not adequate to identify correlation when the structure of the graph is modified without altering the number of edges. Some features such as the degree centrality and assortativity for the $k$-regular random graph model, and the transitivity centrality for the Barab\'asi-Albert random graph model could not be calculated. It happened because a $k$-regular graph presents the same degree for all vertices; therefore, all the vertices present the same degree centrality. For the Barab\'asi-Albert random graph model, a few vertices present very high degree while the majority of vertices present low degree.

Figure \ref{figure:boxplot} represents the correlation coefficient obtained by carrying out simulation 1 (section \ref{simulation1}). The $x$-axis indicates the real correlation used to generate the parameters while the boxplots on the $y$-axis indicate the correlation estimated by using the Spearman's correlation on the spectral radius. Notice that the correlation estimated by applying on the spectral radius ($y$-axis) is indeed monotonic (and also linear) in relation to the actual correlation between the parameters of the graph ($x$-axis). In other words, the higher the correlation between the parameters, the higher is the Spearman's correlation coefficient. Therefore, the spectral radius is in fact a good feature to identify correlation between graphs.

\begin{figure}
\begin{center}
\includegraphics[width=4in]{./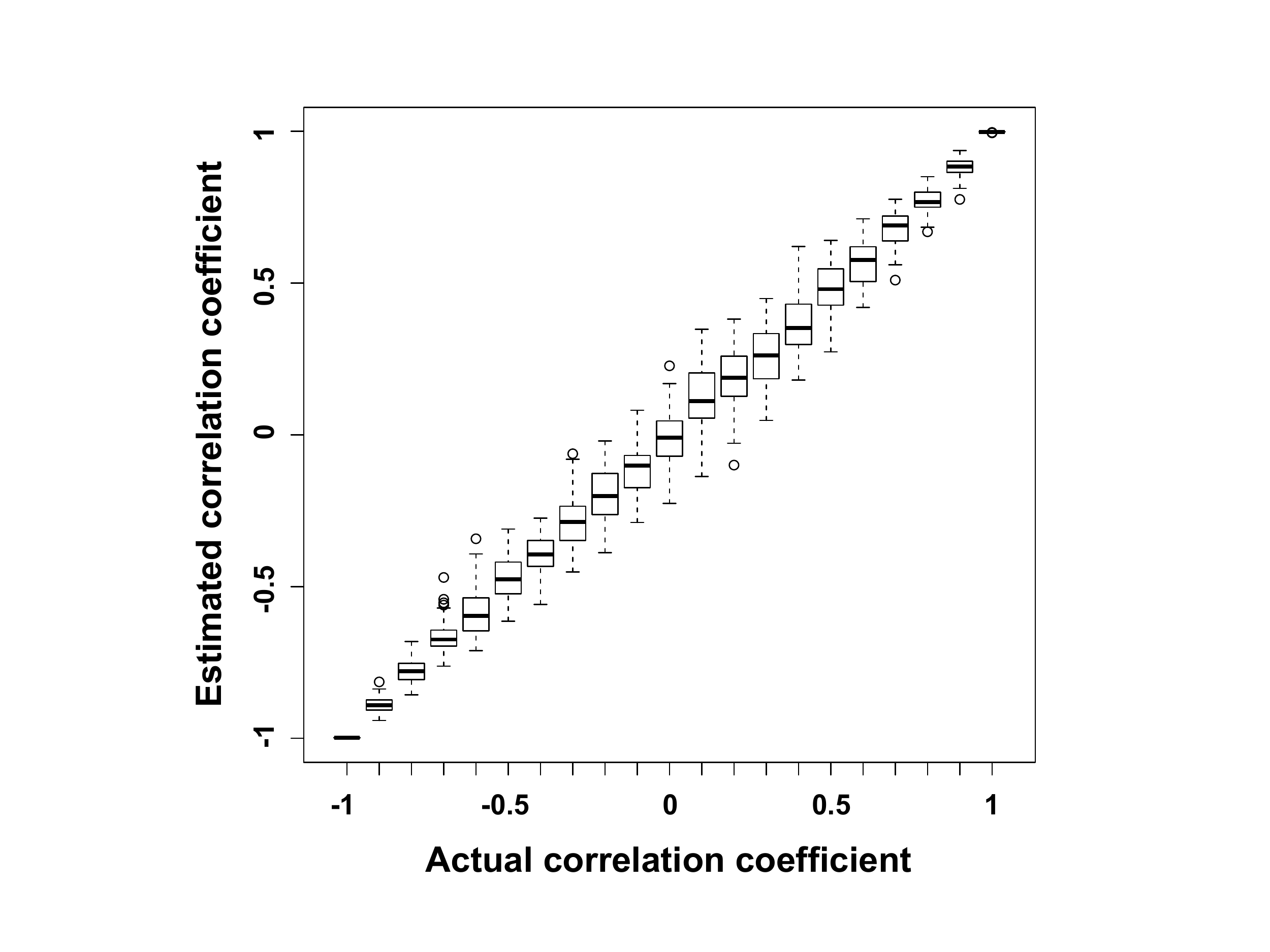}
\end{center}
\caption{Boxplots to investigate the monotonicity of the correlation between spectral radii and the real correlation between the parameters used to generate the graphs. The $x$-axis indicates the real correlation used to generate the parameters while the $y$-axis indicates the correlation estimated by the Spearman's correlation using the spectral radius. Notice that the estimated correlation is monotonic (and linear) in relation to the actual correlation between the parameters of the graphs. \label{figure:boxplot}}
\end{figure}

Figures \ref{figure:roc-H0} and \ref{figure:roc-H1} describe the ROC curves for the correlation between different classes of graph models (Erd\"os-R\'enyi, geometric, regular, Barab\'asi-Albert, and Watts-Strogatz) under the null and alternative hypotheses, respectively, in 1,000 repetitions. For further details regarding the design of this experiment, refer to simulation 2 described in section \ref{simulation2}. The panels on the upper triangle represent the ROCs curves obtained by using the spectral radius. The panel on the lower triangle represents the ROC curve (reference ROC curve) obtained by using the original parameters of the graph. By analyzing the ROC curves, it is possible to notice at least that: (i) the ROC curves under the null hypothesis are in the diagonal (Figure \ref{figure:roc-H0}), i.e., the statistical test is effectively controlling the rate of false positives (the proportion of rejected null hypothesis is as expected by the $p$-value threshold); (ii) the power of the test increases as the number of graphs ($k$) increases (Figure \ref{figure:roc-H1}); and (iii) ROC curves obtained by applying Spearman's correlation in the spectral radius (ROC curves in the upper triangle of Figure \ref{figure:roc-H0} and \ref{figure:roc-H1}) are similar to the one obtained by applying on the original parameters of the graphs (ROC curves in the lower triangle (reference ROC curve) of Figure \ref{figure:roc-H0} and \ref{figure:roc-H1}). These simulation studies show that, in fact, it is possible to retrieve the underlying correlation between the parameters of the graphs by analyzing the spectral radius of their adjacency matrices, at least, for these five random graph models.

\begin{figure}
\begin{center}
\includegraphics[width=4in]{./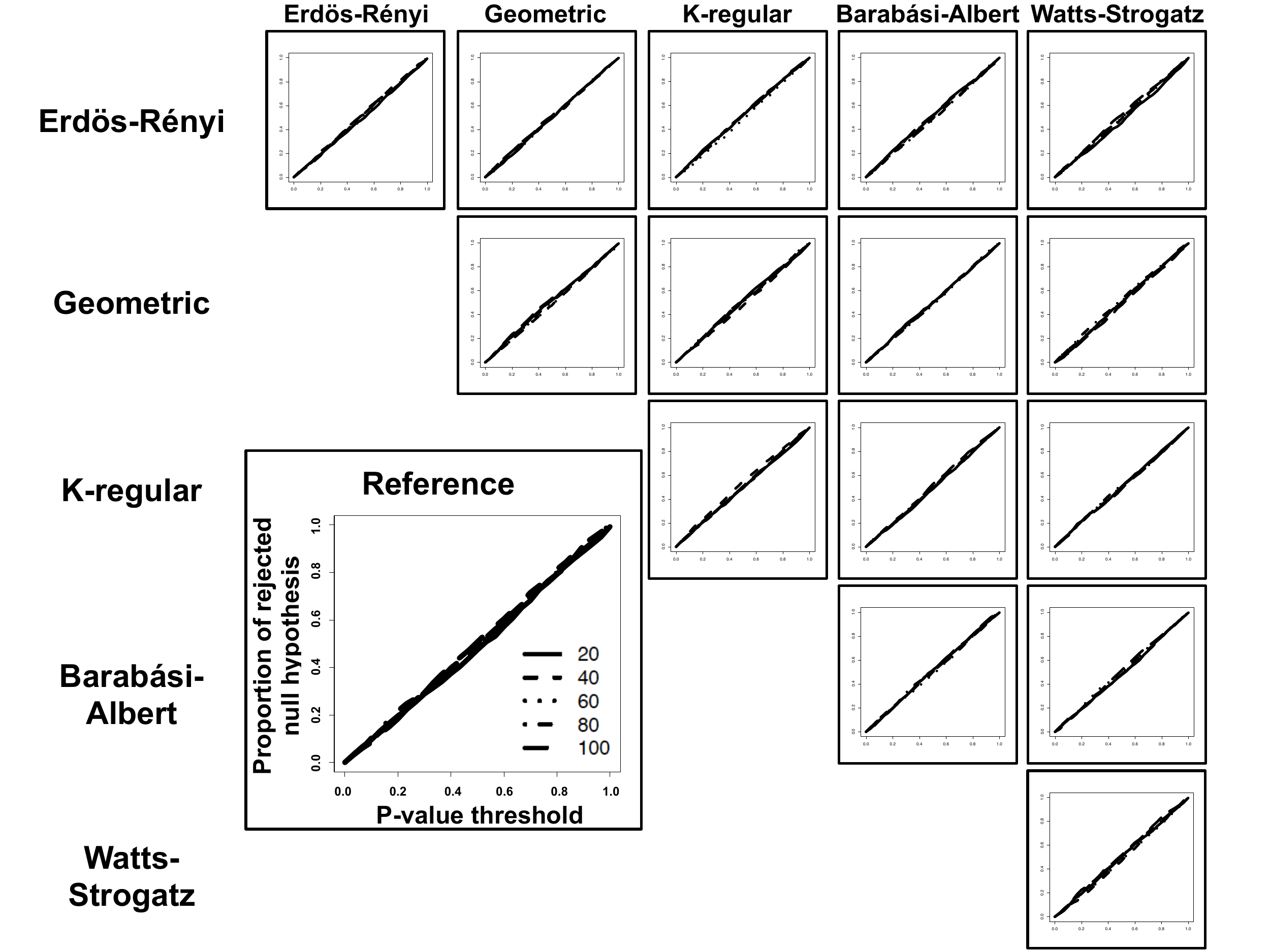}
\end{center}
\caption{On the upper triangle, ROC curves constructed based on the correlations estimated from spectral radii. On the lower triangle, the reference ROC curve constructed based on the parameter of the graph. The $x$-axis represents the $p$-value's threshold and the $y$-axis represents the proportion of rejected null hypothesis in 1,000 repetitions. The different types of line (solid and dashed) represent the number of graphs ($k=20, 40, 60, 80, 100$) used in each repetition. Notice that all lines are in the diagonal, i.e., the statistical test is indeed controlling the rate of false positives as expected. \label{figure:roc-H0}}
\end{figure}

\begin{figure}
\begin{center}
\includegraphics[width=4in]{./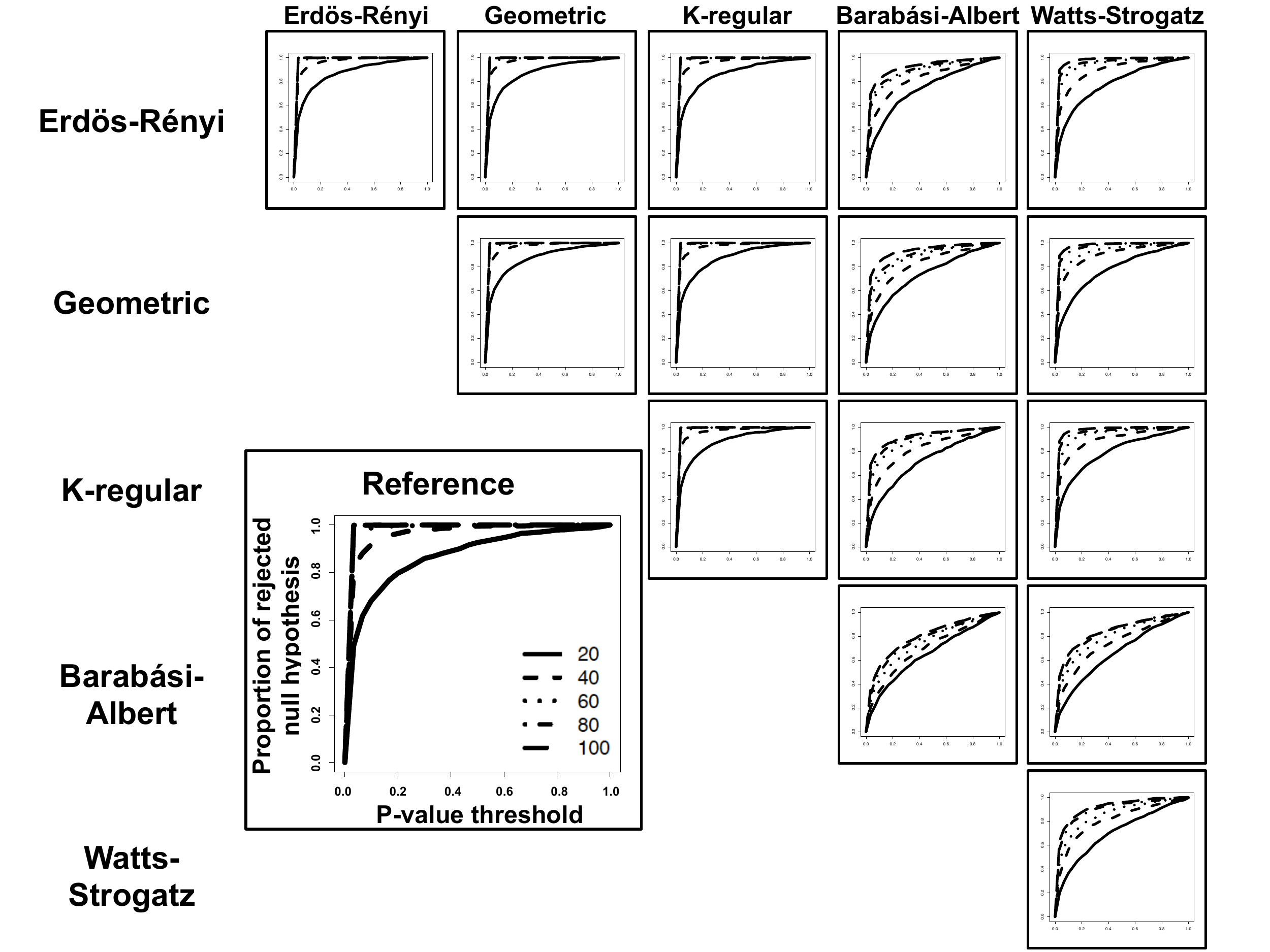}
\end{center}
\caption{On the upper triangle, ROC curves constructed based on the correlations estimated from spectral radii. On the lower triangle, the reference ROC curve constructed based on the parameters of the graph. The $x$-axis represents the $p$-value's threshold and the $y$-axis represents the proportion of rejected null hypothesis in 1,000 repetitions. The different types of line (solid and dashed) represent the number of graphs ($k=20, 40, 60, 80, 100$) used in each repetition. Notice that the greater the number of graphs ($k$), the higher is the power of the test. \label{figure:roc-H1}}
\end{figure}

Figure \ref{figure:fisher-roc} panels (A) and (B) describe the ROC curves for simulation 3 (simulation to evaluate the Fisher's test. For further details, refer to section \ref{simulation3}) under the null and alternative hypotheses, respectively. By analyzing Figure \ref{figure:fisher-roc}A, it is possible to notice that the test in fact controls the rate of false positives. By analyzing Figure \ref{figure:fisher-roc}B, it is possible to see that the power of the test increases proportionally to the number of graphs. In summary, the Fisher's test is indeed identifying distinct correlations between two conditions.

\begin{figure}
\begin{center}
\includegraphics[width=4in]{./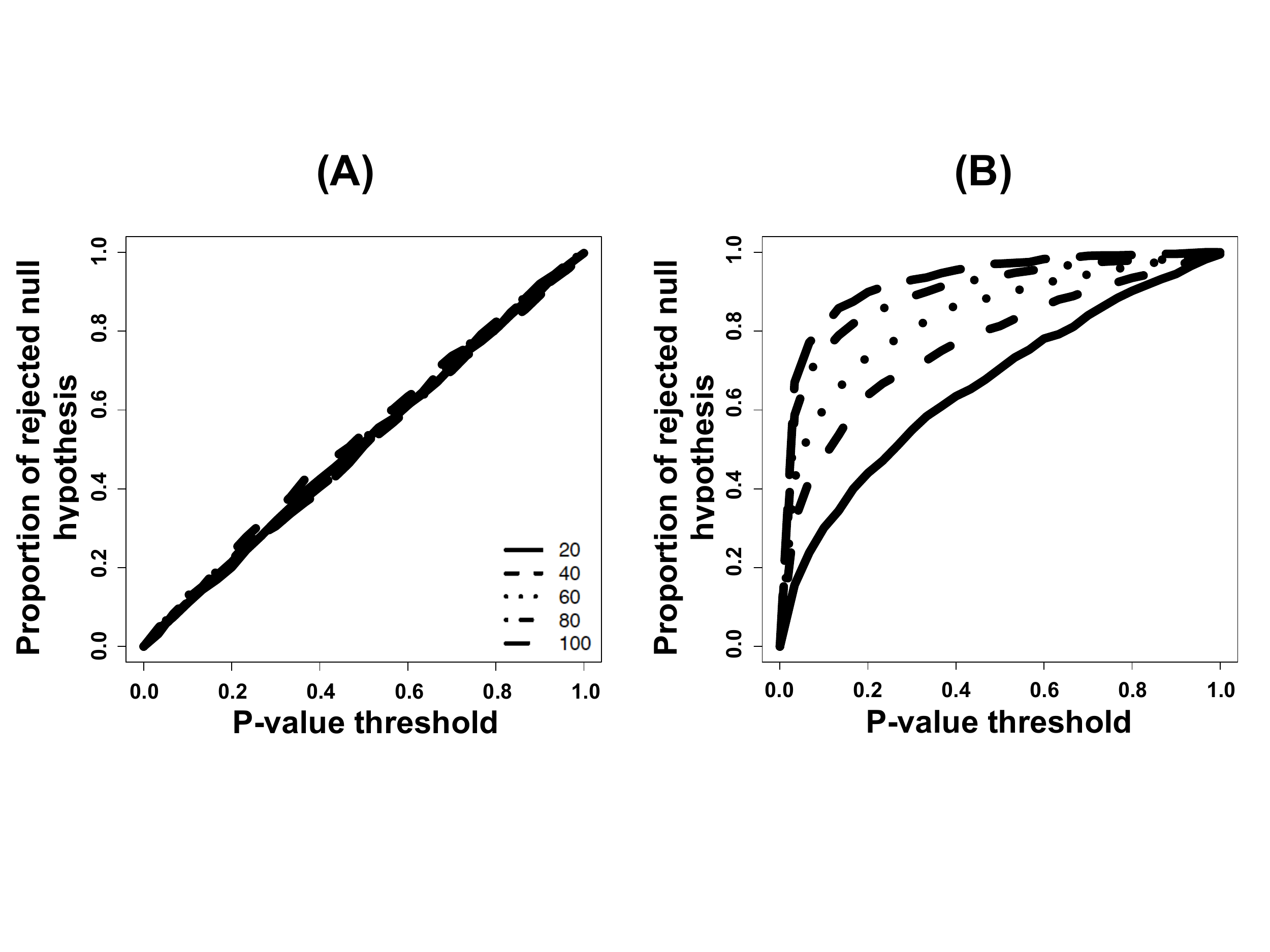}
\end{center}
\caption{Comparison of two conditions by using the Fisher's test. The $x$-axis represents the $p$-value's threshold and the $y$-axis represents the proportion of rejected null hypothesis in 1,000 repetitions. The different types of line (solid and dashed) represent the number of graphs ($k=20, 40, 60, 80, 100$) used in each repetition. (A) Under the null hypothesis, i.e., there is no difference in correlation between two conditions. The lines in the diagonal show that the statistical test is indeed controlling the rate of false positives. (B) The correlation between two conditions is in fact different. Notice that the greater the number of graphs ($k$), the higher is the power of the test. \label{figure:fisher-roc}}
\end{figure}


\section{Application to Autism Spectrum Disorder dataset}

Autism spectrum disorder (ASD) is a neurodevelopmental disorder usually diagnosed in early childhood. ASD etiology is complex and not completely understood (\cite{Ecker2015}), involving several risk factors, such as genetic, environmental, psychological, and neurobiological (\cite{Hallmayer, Betancur}). It is usually diagnosed by a multidisciplinary group composed of physicians and psychologists that, through clinical interviews and tests, identify a combination of unusual behavioral characteristics and try to assess deficits in social communication, social reciprocity, and repetitive and stereotyped behaviors and interests (\cite{Wing}). These symptoms frequently manifest during the child's first three years and are accompanied by developmental differences in brain anatomy, functioning, and functional brain connectivity.

Current studies suggest that ASD is a disorder of brain systems (\cite{Wass2011, Stevenson2012, Just2012, Frith2003})  and that  anatomical abnormalities are subtle but widespread over the brain (\cite{Ecker2013}). Thus, one straightforward approach to enhance our comprehension of neural substrates of this disorder is to investigate differences in brain connectivity when compared to controls. In this context, most studies focus on finding differences between region-to-region functional connectivity or in vertex centrality measures. Due to the lack of a suitable methodological framework, investigations in how the structural organization in one brain sub-network is associated with the organization of another sub-network is scarce. Moreover, the description of these ``correlations'' among sub-networks in clinical populations remains unexplored. In the current study, we establish a novel framework to define correlation between graph structures and illustrate the usefulness of this method by enhancing our comprehension on the neurobiology of ASD. 


\subsection{Dataset description}
A large resting state fMRI dataset initially composed of 908 individuals comprising controls and subjects diagnosed with ASD was downloaded from the ABIDE Consortium website (\url{http://fcon_1000.projects.nitrc.org/indi/abide/}). The ABIDE dataset is fully anonymized in compliance with the HIPAA Privacy Rules and the 1000 Functional Connectomes Project/INDI protocols. Protected health information is not included in this dataset. Further details can be obtained from the ABIDE Consortium website.

The pre-processing of the imaging data was performed using the Athena pipeline downloaded from (\url{http://www.nitrc.org/plugins/mwiki/index.php/neurobureau:AthenaPipeline}). The 351 regions of interest (ROIs) considered as the vertices of the brain network were defined by the CC400 atlas (\cite{Craddock}). A total of 35 ROIs including the ventricles were identified by using the MNI atlas and removed, resulting 316 ROIs for the construction of brain networks. The average time series within the ROIs were considered as to be the region representatives. Subject's head movement during magnetic resonance scanning was treated by using the ``scrubbing'' procedure described by \cite{Power}. Individuals with a number of adequate scans less than 100 after the ``scrubbing'' were discarded, which resulted in 814 subjects for subsequent analyses. Thus, the dataset used in this study was composed of 529 controls (430 males, mean age $\pm$ standard deviation, $17.47 \pm 7.81$ years) and 285 ASD (255 males, $17.53 \pm 7.13$ years).


\subsection{Brain functional networks}

The schema of the entire fMRI data analysis can be seen in Figure \ref{figure:schema}. A brain functional network can be modeled as a graph, i.e., a pair of sets $G=(V,E)$, in which $V$ is the set of regions of interest - ROIs (vertices), and $E$ is a set of functional connectivity (edges) connecting the ROIs. In the current study, the functional connectivity between two ROIs was obtained by calculating the Spearman's correlation coefficient between ROIs $i$ and $j$ ($i,j = 1, \ldots, 316$) for each individual $q=1, \ldots, 814$. Thus, a brain functional network $G^q$ with 316 ROIs can be represented by its adjacency matrix ${\bf A}^q$ with $316 \times 316$ elements ${\bf A}^q_{ij}$ containing the connectivity (Spearman's correlation coefficient) between the ROIs $i$ and $j$ ($i,j=1,\ldots, 316$; $q=1, \ldots, 814$). 

\begin{figure}
\begin{center}
\includegraphics[width=4in]{./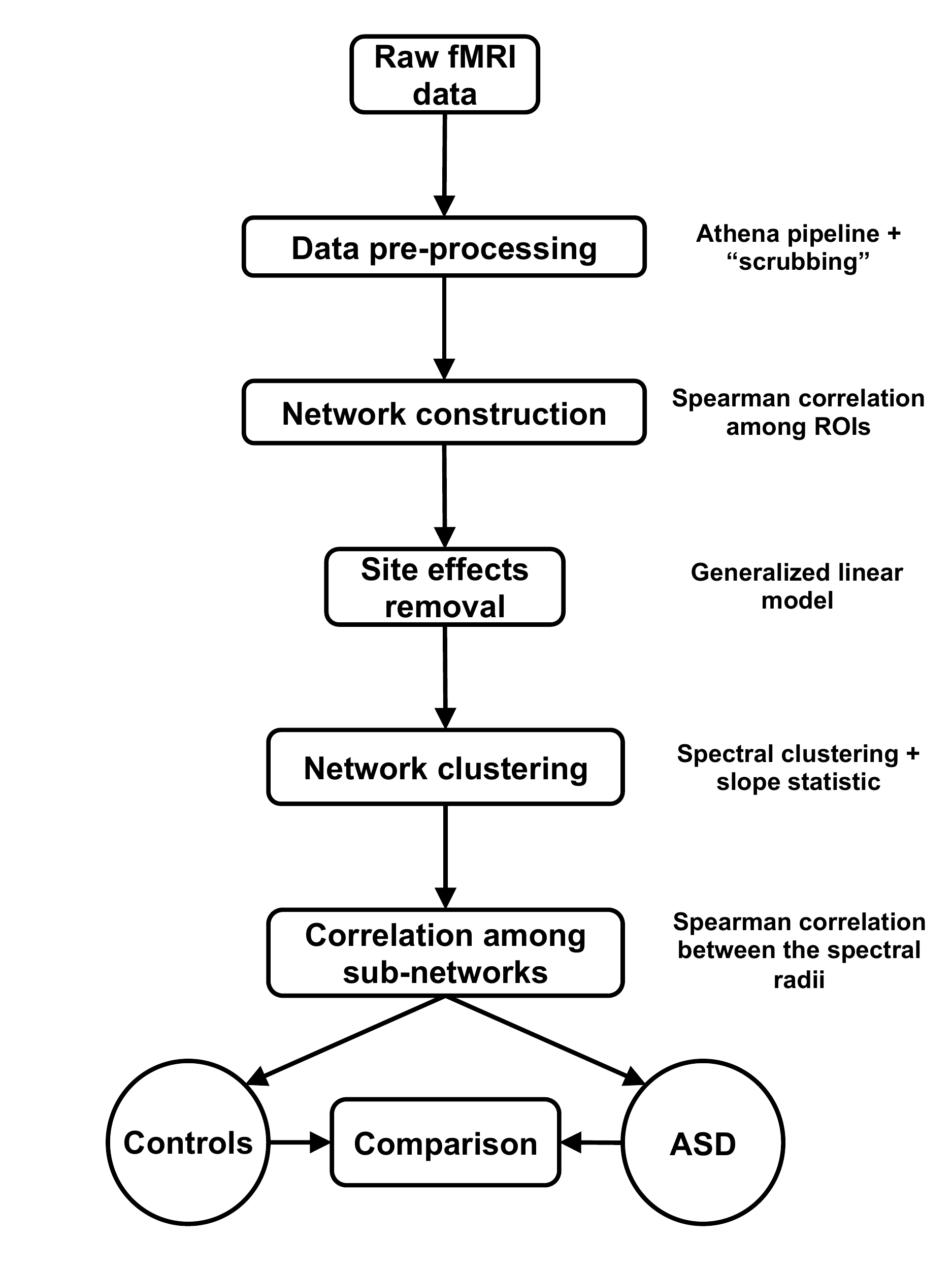}
\end{center}
\caption{General pipeline schema of the fMRI data analysis. Raw fMRI data is pre-processed by using the Athena pipeline. Head movement effects are removed by using the “scrubbing” procedure. Functional networks were constructed by estimating the Spearman's correlation among ROIs and site effects removed by a generalized linear model. The number of functional brain sub-networks (clusters) were estimated by the slope statistic and the sub-networks were obtained by applying the spectral clustering algorithm. Finally, the correlation among sub-networks were estimated by the Spearman's correlation applied on the spectral radii. \label{figure:schema}}
\end{figure}

Site effects were modeled with a generalized linear model (GLM), i.e., with the site as a categorical predictor variable and the correlation coefficient as the response variable. The residuals of the model were used for subsequent analyses as the connectivity filtered by the site effect.

$P$-values for each Spearman's correlation coefficient between ROIs $i$ and $j$ were calculated and corrected for the false discovery rate (FDR) (\cite{Benjamini}). To estimate the number of sub-networks and also to identify the sub-networks themselves in a data-driven manner, we applied the slope criterion (\cite{Fujita}) and the spectral clustering algorithm (\cite{Ng}), respectively, on the average connectivity matrix (the average of the $z$-values associated with the $p$-values) taking into account the entire dataset. The application of the spectral clustering resulted in five (estimated by the slope statistic) well defined sub-networks namely somatomotor, visual, default-mode, cerebellar, and control, depicted in Figure \ref{figure:cluster}. 

\begin{figure}
\begin{center}
\includegraphics[width=4in]{./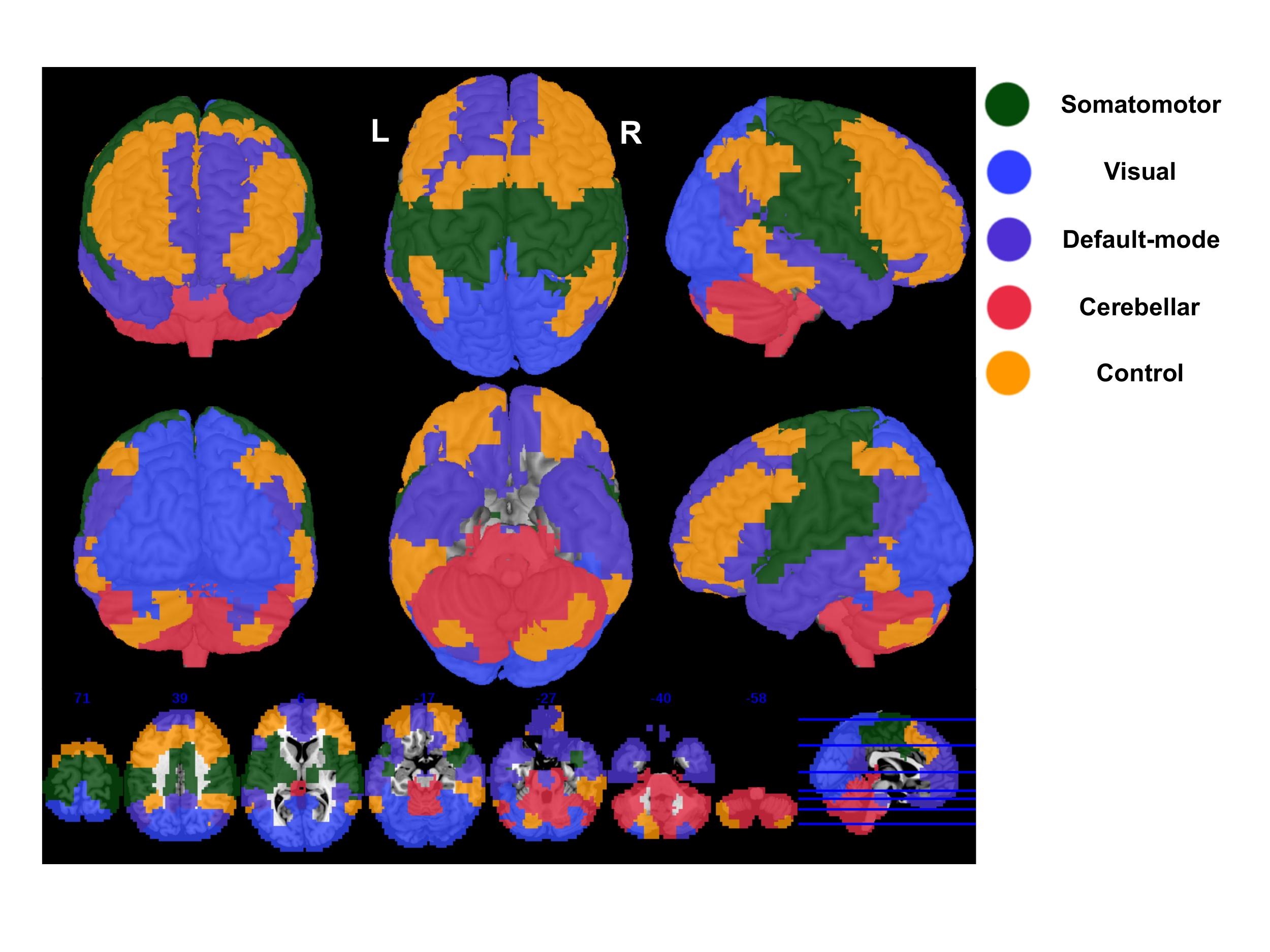}
\end{center}
\caption{Brain functional sub-networks. The ROIs were clustered by the spectral clustering algorithm. The number of sub-networks was estimated as five by the slope statistic. Each sub-network is represented by a different color, namely somatomotor (green), visual (blue), default-mode (purple), cerebellar (red), and control (orange). R: right; L: left. \label{figure:cluster}}
\end{figure}

To obtain the adjacency matrix that represents the brain functional sub-network of each individual, we set ${\bf A}^{q}_{ij} = 1$ if the $p$-value corrected for the FDR is less than 0.05, and ${\bf A}^{q}_{ij} = 0$, otherwise. Notice that the Spearman's correlation test is not used as a statistical test to identify correlation between two ROIs but only as an objective criterion to construct the adjacency matrix of the graph. Then, we calculated the spectral radius for each sub-network of each subject. Thus, we obtained five vectors (one for each sub-network) of size 529 and other five vectors of size 285, for controls and ASD, respectively.

The 10 correlations among all the five sub-networks were estimated by using the spectral radius only for controls (Figure \ref{figure:net}). Figure \ref{figure:net} shows the statistically significant ($p<0.05$ after FDR correction for multiple tests) correlations between brain sub-networks. Interestingly, all sub-networks are positively correlated among them. The thickness of the edge represents the strength of the correlation, i.e., the thicker the edge, the higher is the correlation between sub-networks.

\begin{figure}
\begin{center}
\includegraphics[width=4in]{./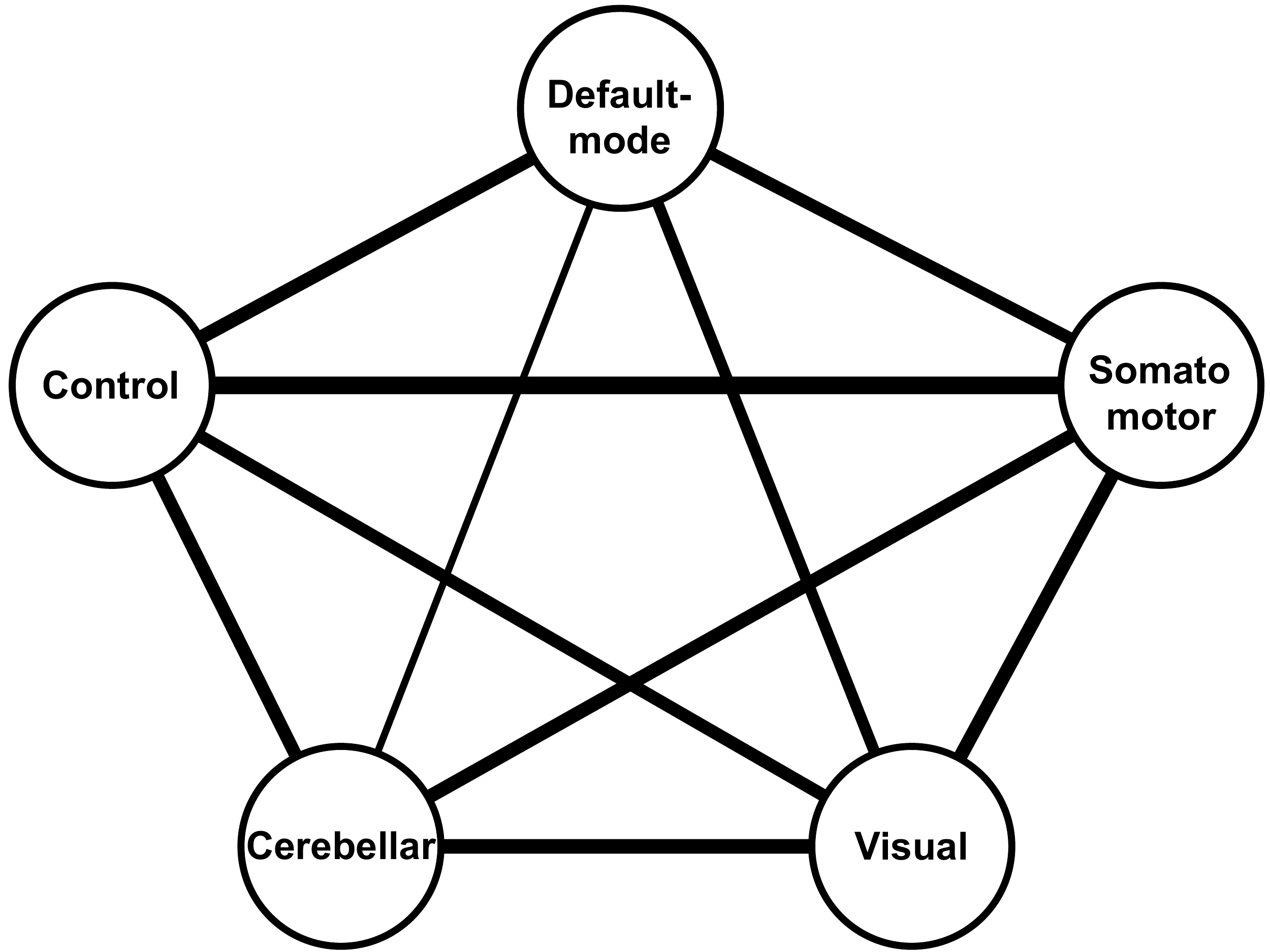}
\end{center}
\caption{Correlation between control brain sub-networks. The thickness of the edge represents the correlation coefficient, i.e., the thicker the edge, the higher is the absolute value of the correlation between sub-networks. Only statistically significant correlations at a significance threshold of 0.05 (after FDR correction for multiple tests) are shown. All identified correlations are positive. \label{figure:net}}
\end{figure}

Then, to identify correlations among sub-networks that are different between controls and ASD, we carried out the statistical test developed by Fisher (\cite{Fisher}) (section \ref{section:spearman_correlation}).

Figure \ref{figure:diff-net} illustrates the results obtained by comparing the correlations between controls versus ASD in the ``scrubbed'' data. The ``scrubbing'' procedure (\cite{Power}) is necessary to remove head movement effects that may cause spurious results. Only correlations that are statistically different between controls and ASD are represented by edges. At the edges there are the scatterplots with the linear regression lines that fit the data (blue and red represent the spectral radii of controls and ASD, respectively) and the Spearman's correlation coefficient for controls and ASD. Interestingly, default-mode and control, default-mode and somatomotor, and default-mode and visual systems showed statistically significant higher inter-correlation in ASD when compared to controls ($p<0.05$ after FDR correction for multiple tests).

\begin{figure}
\centering
\includegraphics[width=4in]{./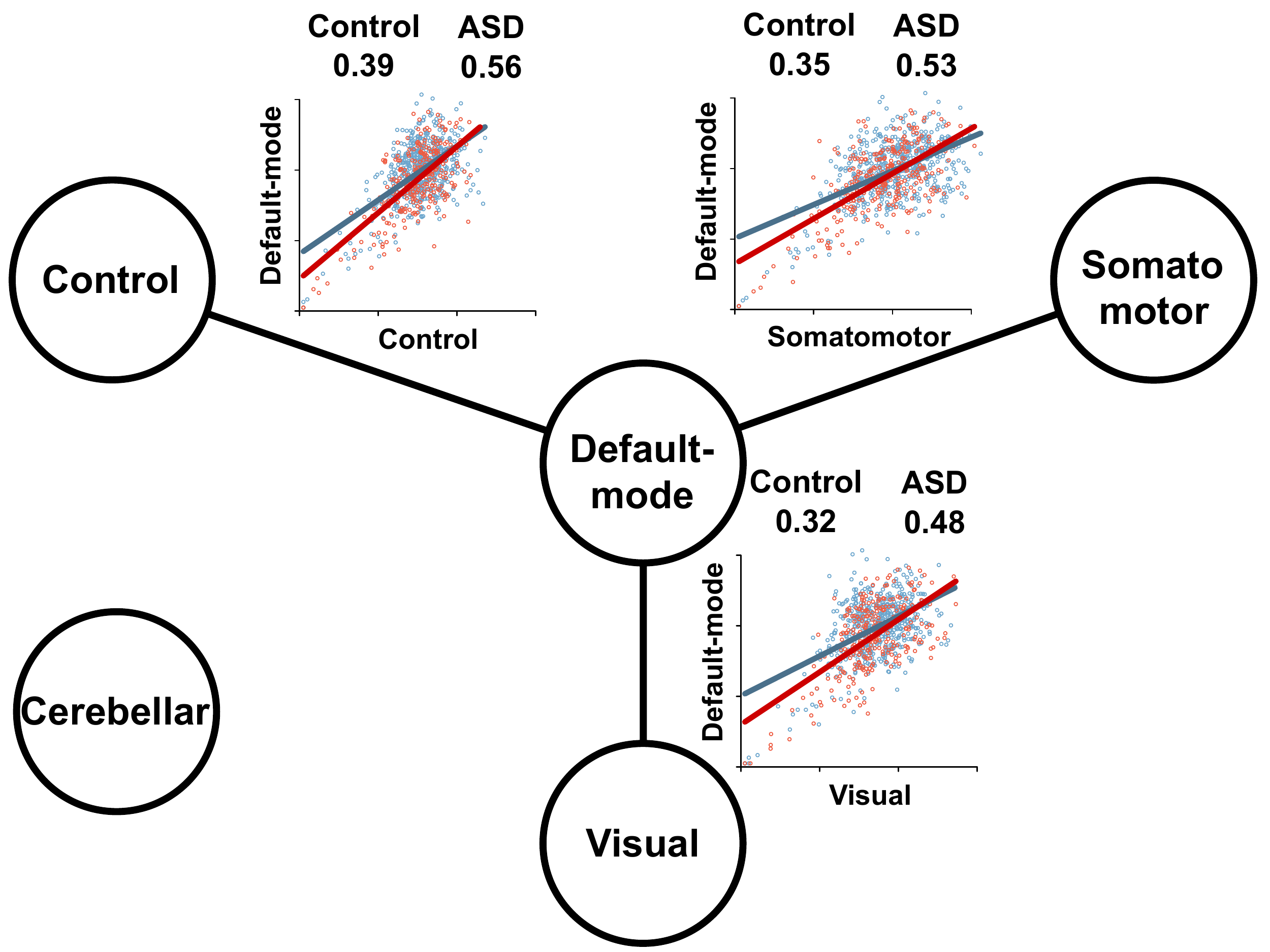}
\caption{Correlation between brain sub-networks in the ``scrubbed'' data. At the edges there are represented the scatterplots with the linear regression lines that fit the data and the Spearman's correlation coefficient for controls and ASD. Blue and red dots represent the eigenvalues for controls and ASD, respectively. Default-mode and control, default-mode and somatomotor, and default-mode and visual systems showed higher inter-correlation in ASD than in controls ($p<0.05$ after FDR correction for multiple tests) \label{figure:diff-net}}
\end{figure}

Currently, ASD is characterized as a disconnection syndrome (\cite{geschwind2007autism}) that affects information processing at both the local and global levels (\cite{hernandez2014neural}). This hypothesis has been mainly derived from studies suggesting that the wide heterogeneity of autistic symptoms and traits are highly unlikely to be due to impairments in a single system,  or brain region, but instead emerge from disruptions in multiple neurocognitive  systems (for a review see \cite{Ecker2015}). The framework for correlation inference between graphs proposed in our study therefore represents a useful and suitable method for examining brain connectivity alterations in ASD related to abnormalities in the relationships between domain-specific sub-networks in the brain.

Our findings indicate a different pattern of interactions between the default-mode and several sub-networks associated with sensorymotor, visual and executive processing in ASD. Abnormalities in the connectivity between nodes of the default-mode network (DMN) has been widely investigated in ASD (\cite{assaf2010abnormal, kennedy2008functional, weng2010alterations}) giving its associations with social cognition (\cite{buckner2008brain}). There are also functional connectivity studies reporting ASD-related differences in motor and visual networks (\cite{mcgrath2012atypical, nebel2014precentral}). Moreover, task-based connectivity studies have also reported differences in individuals with ASD in fronto-parietal nodes of the control network in response to cognitive control tasks (\cite{kana2007inhibitory}). While these previous studies have contributed to the characterization of the disconnection model of ASD, evidence of commonly interacting abnormalities in several distinct neurocognitive systems remain scarce. The present results therefore go beyond previous research by providing evidence for altered interactions between sub-networks in ASD ranging from sensory and motor processing to higher-order cognitive functions.

The mechanisms underlying this pattern of widespread connectivity abnormalities are likely to be related to the complex nature of atypical trajectory of brain maturation in ASD, which is probably mediated by many genetic and environmental factors and their interactions (\cite{abrahams2010connecting}). As a result, the impact of ASD on brain anatomy, functioning and connectivity is expected to be multidimensional and observable at multiple neural systems (\cite{Ecker2015}). For instance, ASD prediction accuracy using a support vector machine (SVM) analytic approach (\cite{ecker2010describing}) was shown to be improved by the use of a combined set of different morphometric features of the cortical surface rather than a specific anatomical characteristic; moreover a spatially distributed pattern of regions instead of an isolated brain region contributed with maximal classification weights to the prediction model. It is therefore likely that the distributed patterns of functional connectivity differences reported here also reflect some of these systems-level features of ASD pathology.

This result reinforces the theory that ASD is related to abnormal neurodevelopmental processes (Frith, 2003), which is spread across the whole brain. By using the proposed methodological framework of graph correlations, our main contribution was to demonstrate that these latent processes result in correlated topological organization between multiple brain systems. To the best of our knowledge, this is the first description of this feature in ASD.


\section{Final remarks}
In this study, we used the Spearman's rank correlation as a measure to identify dependence between graphs due to its simplicity. However, instead of Spearman's rank correlation, other methods that identify a broader type of dependence in data can be used, such as distance correlation (\cite{Szekely2007}), mutual information (\cite{shannon2015mathematical}), Hoeffding's D measure (\cite{hoeffding1948non}), and the measure proposed by Heller-Heller-Gorfine (\cite{Heller}).

Here, we focused on the development of a method to identify correlation for undirected graphs. For directed graphs, little is known about their spectrum. Notice that the adjacency matrix of an undirected graph is not symmetric, and consequently, the eigenvalues are not real numbers. Thus, in order to develop a framework to identify correlation between undirected graphs, it is necessary to better understand how are their eigenvalues (or eventually use another approach not based on their spectrum).

To the best of our knowledge, there is no general analytical equation that describes the spectral radius as a function of the parameters for all kinds of random graphs models. However, for the Erd\"os-R\'enyi random graph model we prove that the proposed approach is consistent and not biased. For the other four random graph models used in this study, we showed by simulations that the spectral radius is indeed associated with the parameters of the graph and it is also a good feature to infer correlation between graphs. This approach based on the spectral radius seems to be promising and we hope it may open opportunities to develop other formal statistical methods in graphs.

\bibliography{main}
\bibliographystyle{plain}

\end{document}